\documentclass[11pt, a4paper]{article}
\usepackage[utf8]{inputenc}
\usepackage[usenames,dvipsnames]{xcolor}
\usepackage{amsmath}
\usepackage{amsthm}
\usepackage{amsfonts}
\usepackage{amssymb}
\usepackage{array}  
\usepackage{color}
\usepackage[english]{babel}
\usepackage{mathrsfs}
\usepackage{graphicx}
\usepackage{dsfont}
\usepackage{geometry}
\definecolor{orangebis}{rgb}{0.99,0.25,0.00}
\definecolor{greenbis}{rgb}{0.10,0.85,0.10}
\definecolor{bluebis}{rgb}{0.10,0.30,0.99}
\usepackage[final]{hyperref}
\usepackage{subcaption}
\usepackage{appendix}
\usepackage{chngcntr}
\usepackage{etoolbox}

\theoremstyle{plain}
\newtheorem{theorem}{Theorem}
\counterwithin{theorem}{section}

\newtheorem{lemma}[theorem]{Lemma}

\theoremstyle{definition}

\newtheorem{remark}[theorem]{Remark}

\newcommand{\R}{\mathbb{R}}

\newcommand{\Z}{\mathbb{Z}}

\newcommand{\E}{\mathbb{E}}

\def\calE{\mathcal{E}}

\newcommand{\vol}{\textup{Vol}}
\newcommand{\prob}{\mathbb{P}}

\newcommand{\un}{\mathds{1}}

\renewcommand{\textbf}[1]{\begingroup\bfseries\mathversion{bold}#1\endgroup}
\def\var{\mathop{\mathrm{Var}}}

\title{A lower bound for the Bogomolny and Schmit constant for random monochromatic plane waves}
\date{\today}
\author{Maxime Ingremeau and Alejandro Rivera}

\begin{document}

\maketitle

\section{Introduction}
Let $f: \R^2 \rightarrow \R$ be the stationary, isotropic planar centered almost surely continuous Gaussian field\footnote{We refer the reader to \cite{abrahamsen1997review} for the definition and properties of random fields.} with covariance $\E[f(x)f(y)]=J_0(|x-y|)$ where $J_0(r)=\frac{1}{2\pi}\int_0^{2\pi}e^{ir\sin(\theta)}d\theta$ is the $0$th Bessel function of the first kind. We call this field \textbf{the random monochromatic plane wave}. For each $R>0$, let us denote by $B(0,R)$ the ball of centre $0$ and of radius $R$, and by $N(R,f)$ the number of connected components of $\R^2\backslash f^{-1}(\{0\})$ included in $B(0,R)$.

In \cite{ns_2009}, \cite{ns_2016}, Nazarov and Sodin showed that the limit
\begin{equation}
\nu_{BS} = 4\pi\times\lim_{R\rightarrow \infty} \frac{\E N(R,f)}{\pi R^2}
\end{equation}
exists, and is positive. However, their method does not give an explicit value for the constant $\nu_{BS}$, sometimes called the \emph{Bogomolny-Schmit constant} (or also Nazarov-Sodin constant). An equivalent definition for $\nu_{BS}$ is that it is the limit of the average number of nodal domains for a random spherical harmonic divided by the degree of this spherical harmonic. The factor $4\pi$ can be interpreted as the area of the unit sphere. Bogomolny and Schmit gave in \cite{bs_2002} a highly heuristical argument, based on percolation, which yielded the value
\begin{equation*}
\nu_{BS} \simeq 0.0624.
\end{equation*}

However, numerical simulations carried out by Nastasescu (\cite{nastasescu_thesis}), Konrad (\cite{konrad_thesis}) and Beliaev and Kereta (\cite{bk_2013}) showed that $\nu_{BS}\simeq 0.0589$, contradicting Bogomolny and Schmit's prediction by a few percents. Experimental measurements of $\nu_{BS}$ were also realised by Kuhl, H{\"o}hmann, St{\"o}ckmann and Gnutzmann (\cite{kuhl2007nodal}), using Berry's conjecture (\cite{berry1977regular}) that high frequency eigenmodes in a chaotic cavity should locally behave like the monochromatic random wave $f$. From the mathematical point of view, though, very little is known regarding the value of $\nu_{BS}$. The best rigorous lower bound so far was the one given in \cite{nastasescu_thesis}, which is of the order of $10^{-319}$. The aim of this note is obtain a much better lower bound by elementary means:
\begin{theorem}\label{t.main}
\begin{equation*}
\nu_{BS}\geq 1.39\times 10^{-4}.
\end{equation*}
\end{theorem}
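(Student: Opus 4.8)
The plan is to bound $\nu_{BS}$ from below by exhibiting, inside a fixed disk, a \emph{local} sign configuration of $f$ that forces a nodal domain to be trapped in that disk, and then to spread this configuration over $B(0,R)$ by stationarity. Fix a radius $\rho>0$, write $D_\rho=B(0,\rho)$ and $C_\rho=\partial D_\rho$, and define the event
\begin{equation*}
E=\big\{f<0 \text{ everywhere on } C_\rho\big\}\cap\big\{f>0 \text{ at some point of } D_\rho\big\}.
\end{equation*}
On $E$, the connected component of $\{f>0\}$ containing an interior positive point cannot reach $C_\rho$, hence is contained in the open disk; it is therefore a bounded nodal domain lying inside $D_\rho$. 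This topological implication is the only deterministic input, and it is elementary.

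The first step is the localization/counting reduction. By stationarity of $f$, the translate $E_c$ of $E$ centred at any $c$ has probability $\prob(E)$. Pack $B(0,R)$ with a maximal family of disjoint disks $D_\rho(c_j)$; their number is asymptotically $\frac{\delta}{\pi\rho^2}\,\pi R^2$ as $R\to\infty$, where $\delta=\frac{\pi}{2\sqrt3}$ is the circle-packing density. Domains trapped in disjoint disks are distinct and all lie in $B(0,R)$, so $N(R,f)\geq\sum_j\un_{E_{c_j}}$ and linearity of expectation gives $\E N(R,f)\geq \#\{c_j\}\cdot\prob(E)$, whence
\begin{equation*}
\nu_{BS}=4\pi\lim_{R\to\infty}\frac{\E N(R,f)}{\pi R^2}\geq 4\pi\cdot\frac{\delta}{\pi\rho^2}\,\prob(E)=\frac{2\pi}{\sqrt3\,\rho^2}\,\prob(E).
\end{equation*}
Everything then reduces to a good \emph{lower} bound on $\prob(E)$ and to optimizing over $\rho$.

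The heart of the argument, and the main obstacle, is estimating $\prob(E)$ from below efficiently. The naive route is to decompose $f(x)=f(0)\,J_0(|x|)+g(x)$ with $g\perp f(0)$ (legitimate since $\E[f(x)f(0)]=J_0(|x|)$ and $\var f(0)=1$), choose $\rho$ in the first negative lobe of $J_0$, and force $f(0)\geq t$ together with $\sup_{C_\rho}g<t|J_0(\rho)|$, which guarantees $E$. This yields $\prob(E)\geq \prob(\Normal\geq t)\,\prob\big(\sup_{C_\rho}g<t|J_0(\rho)|\big)$, but controlling $\sup_{C_\rho}g$ by Borell--TIS forces $t$ to be several standard deviations, and the resulting Gaussian tail is too lossy to reach $10^{-4}$. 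Instead I would exploit the exact angular structure of $f$ on $C_\rho$: by Graf's addition theorem, $J_0(|x-y|)=\sum_{n\in\Z}J_n(\rho)^2e^{in(\theta-\theta')}$ on the circle, so that
\begin{equation*}
f(\rho e^{i\theta})=J_0(\rho)\,\xi_0+\sqrt2\sum_{n\geq1}J_n(\rho)\big(\alpha_n\cos n\theta+\beta_n\sin n\theta\big),
\end{equation*}
with $\xi_0,\alpha_n,\beta_n$ independent $\Normal$ and $\sum_n J_n(\rho)^2=1$; here the $n=0$ term is exactly $J_0(\rho)f(0)$ and the tail is $g|_{C_\rho}$. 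Since $J_n(\rho)$ decays quickly for $n>\rho$, only a handful of low modes carry appreciable mass, and the event that these carry values making $f$ single-signed on $C_\rho$ while forcing a positive excursion inside $D_\rho$ is a genuine \emph{constant-probability} event, computable as a low-dimensional Gaussian integral rather than a tail. The remaining work is to show that the high modes ($n\gg\rho$) cannot destroy this configuration except on an event of small probability, controlled again by Borell--TIS applied to the high-frequency part of $g$, whose expected supremum is small because $\sum_{n>M}J_n(\rho)^2$ is tiny.

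Finally, I would optimize $\rho$ (taking it near the first zero $j_{1,1}\simeq3.83$ of $J_1$, where $J_0$ attains its negative minimum $\approx-0.40$ and the dominant angular modes have small $n$) and the finite mode-configuration defining the constant-probability event, then combine the resulting lower bound on $\prob(E)$ with the packing inequality above to produce the constant $1.39\times10^{-4}$. I expect the delicate point to be quantitative rather than structural: keeping the deterministic trapping robust to the high-mode remainder while evaluating the low-dimensional integral honestly, since any slack in the supremum estimate for the remainder, or any over-conservative finite-dimensional bound, degrades the final constant directly.
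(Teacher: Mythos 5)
Your overall architecture is sound and genuinely close in spirit to the paper's: trap a nodal domain by a sign condition on a circle, then use stationarity to convert the probability of that local event into a density of nodal domains. Your deterministic trapping step is correct (and avoids the paper's use of $\Delta f+f=0$, which the paper needs in Lemma \ref{lem:BoundDist} only because its event does not prescribe signs), and your packing argument is valid. Note, however, that it is quantitatively weaker than the paper's counting: hexagonal packing gives the factor $4\delta/\rho^2=\frac{2\pi}{\sqrt{3}\rho^2}\approx 0.25$ at $\rho=3.8$, whereas the paper bounds the area of each trapped domain by the isodiametric inequality and compares with the expected area of the set of trapping centres, obtaining $16/r^2\approx 1.11$ --- about $4.4$ times more per unit of probability. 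Consequently, to reach $1.39\times 10^{-4}$ along your route you would need $\prob(E)\gtrsim 5.5\times 10^{-4}$, several times larger than the bound $\approx 1.25\times 10^{-4}$ that the paper's Lemma \ref{l.good} provides for its (closely related) event.

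The genuine gap is that you never prove any lower bound on $\prob(E)$: everything after ``the heart of the argument'' is a plan, not an argument, and the plan has two concrete problems. First, the claim that the low-mode configuration is a ``constant-probability event'' is unsupported: the circle $C_\rho$ has circumference $2\pi\rho\approx 24$, and the restriction of $f$ to it has unit pointwise variance and expected number of zeros $\sqrt{2}\rho\approx 5.4$ (by Kac--Rice), so uniform negativity on $C_\rho$ is itself a rare, persistence-type event whose probability is of the same order as the Gaussian tails you dismiss as ``too lossy''; since $|J_0(\rho)|\approx 0.40$ while the angular part has pointwise variance $1-J_0(\rho)^2\approx 0.84$, any explicit region of coefficient space realizing the event forces the coefficients several standard deviations out. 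Second, even granting a sensible choice of truncation level and margin, the Gaussian measure of the region ``this trigonometric polynomial with $\sim 10$ Gaussian coefficients is single-signed on $C_\rho$'' has no closed form, and you supply no rigorous evaluation scheme (certified numerics, discretization with error control, or an analytic surrogate), nor any numbers for the Borell--TIS control of the high modes; so the constant $1.39\times 10^{-4}$ is asserted, not derived. It is also worth pointing out that the ``naive route'' you reject is essentially what the paper executes successfully: conditioning on $f(0)=x_0$, it replaces the Borell--TIS sup bound by an exact Kac--Rice computation of the expected number of level crossings of the circle restriction, and then applies Markov's inequality; that substitution is precisely what makes the tail approach efficient enough to reach $10^{-4}$. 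To repair your proof, either carry out a rigorous (likely computer-assisted) evaluation of $\prob(E)$, or replace your estimation step by the Kac--Rice argument of Lemma \ref{l.good} --- but in the latter case you should also adopt the paper's area-based counting, since with your packing constant the resulting bound would fall short of the stated $1.39\times 10^{-4}$ by roughly an order of magnitude.
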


This bound is much smaller than the expected value of $\nu_{BS}$. 
However, our method does not take into account all nodal domains, but only those which are included in circles of radius $3.8$ (the first minimum of the Bessel function $J_0$). After visual inspection of computer simulations we expect that these are not very common. Aside from this, in our use of Lemma \ref{lem:BoundDist}, we ignore the fact that small, isolated nodal domains should be somewhat rare. We hope that our methods can be used to count more general nodal domains, and to obtain sharper lower bounds on $\nu_{BS}$.\\

Before proving Theorem \ref{t.main} let us recall two basic facts about random monochromatic plane waves.
\begin{itemize}
\item The function $f$ almost surely satisfies $\Delta f+ f=0$ where $\Delta$ is the Laplace operator. This follows by applying the Laplace operator with respect to $x$ to the expression $\E[f(x)f(y)]=J_0(|x-y|)$ because $J_0$ is also an eigenfunction of the Laplace operator.
\item In polar coordinates, $f$ can be given the simple expression 
\begin{equation}\label{eq:GaussianField}
f(r, \theta)=X_0 J_0(r) - \sqrt{2} \sum_{n\geq 1}J_n(r)\left(X_n\cos(n\theta)+Y_n\sin(n\theta)\right)\, .
\end{equation}
where for each $n\in\Z$ we denote by $J_n$ the $n$th Bessel function of the first kind, that is $J_n(x)=\frac{1}{2\pi}\int_0^{2\pi} e^{in\theta -x\sin(\theta)}d\theta$, and where $(X_k)_{k\geq 0}$ and $(Y_k)_{k\geq 1}$ are two families of centred Gaussian random variables with unit variance and independent as a whole. An explanation for this fact can be found for instance in Section 4.2 of \cite{ns_10}.
\end{itemize}
\paragraph{Acknowledgement}
The authors would like to thank A. Deleporte for suggesting the proof of Lemma \ref{lem:BoundDist}, as well as M. Sodin, S. Muirhead and M. McAuley for pointing out mistakes in previous versions of this note.

The first author was funded by the Labex IRMIA, and partially supported by the Agence Nationale de la Recherche project GeRaSic
(ANR-13-BS01-0007-01).\\
The second author was funded by the ERC grant Liko No 676999.

\section{Proof of Theorem \ref{t.main}}
We now prove Theorem~\ref{t.main}. The idea is to consider points such that $f$ does not vanish on a certain circle around this point. We start by producing a lower bound for the probability that $0$ is such a point.
\begin{lemma}\label{l.good}
Fix $r>0$ and write $\Psi:t\mapsto\int_t^{+\infty}e^{-t^2/2}\frac{dt}{\sqrt{2\pi}}$. Let $\calE_r$ be the event that the zero set of $f$ does not intersect the circle of radius $r$. Then, for each $T\in\R$,
\[
\prob\left[\calE_r\right]\geq 2\Psi(T)- \sqrt{2}r\Psi\left(\frac{T}{\sqrt{1-J_0(r)^2}}\right)\, .
\]
\end{lemma}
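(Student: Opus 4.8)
The plan is to reduce the statement to a first–moment (Kac--Rice) computation for the restriction of $f$ to the circle of radius $r$, exploiting the decomposition \eqref{eq:GaussianField}. Write $g(\theta)=f(r,\theta)$; this is a smooth stationary centred Gaussian process on the circle with $\E[g(\theta)^2]=J_0(0)=1$, and $\calE_r$ is exactly the event that $g$ has no zero. The key structural observation is that by \eqref{eq:GaussianField} we may split $g=m+h$, where $m=X_0J_0(r)$ is the ($\theta$–independent) average of $g$ over the circle and $h(\theta)=-\sqrt2\sum_{n\ge1}J_n(r)\left(X_n\cos n\theta+Y_n\sin n\theta\right)$ is the zero–mean fluctuation. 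Crucially $m$ and $h$ are independent, $m=J_0(r)X_0$ with $X_0\sim\Normal$, and $h$ is stationary with $\E[h(\theta)^2]=2\sum_{n\ge1}J_n(r)^2=1-J_0(r)^2$.

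First I would restrict to the event $\{|X_0|>T\}$, of probability exactly $2\Psi(T)$, and write
\[
\prob[\calE_r]\ \ge\ \prob\big[|X_0|>T\big]-\prob\big[|X_0|>T,\ \calE_r^{c}\big]\ =\ 2\Psi(T)-\prob\big[|X_0|>T,\ \calE_r^{c}\big].
\]
It then remains to bound the last probability by $\sqrt2\,r\,\Psi\big(T/\sqrt{1-J_0(r)^2}\big)$. On $\calE_r^{c}$ the continuous loop $g$ changes sign, so (discarding the probability–zero event of a tangential zero, by Bulinskaya's lemma) it has an even number $N_0:=\card\{\theta\in[0,2\pi):g(\theta)=0\}\ge 2$ of zeros, whence $\un_{\calE_r^{c}}\le\tfrac12 N_0$ and
\[
\prob\big[|X_0|>T,\ \calE_r^{c}\big]\ \le\ \tfrac12\,\E\big[N_0\,\un_{\{|X_0|>T\}}\big].
\]

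The heart of the argument is the evaluation of $\E[N_0\,\un_{\{|X_0|>T\}}]$. Since $g=m+h$ with $m=J_0(r)X_0$ constant in $\theta$, a zero of $g$ is exactly a crossing of the \emph{deterministic} level $-J_0(r)X_0$ by $h$; conditioning on $X_0$ (independent of $h$) therefore turns $N_0$ into the number of level crossings of a fixed level by the stationary process $h$. The Kac--Rice formula gives, for a level $\ell$,
\[
\E\big[\card\{\theta:h(\theta)=\ell\}\big]=\frac{\sqrt2\,r}{\sqrt{1-J_0(r)^2}}\exp\!\Big(\!-\frac{\ell^{2}}{2\,(1-J_0(r)^2)}\Big),
\]
using that the covariance $\rho(\tau)=J_0\big(2r\sin(\tau/2)\big)$ of $g$ (and of $h$, up to the additive constant $-J_0(r)^2$) satisfies $\rho(0)=1$ and $\rho''(0)=-r^2/2$, the latter from $J_0''(0)=-\tfrac12$ and $\tfrac{d}{d\tau}\big(2r\sin(\tau/2)\big)\big|_{0}=r$, so the second spectral moment equals $r^2/2$. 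Taking $\ell=-J_0(r)X_0$, multiplying by the density of $X_0$ and integrating over $|X_0|>T$, the two Gaussian exponents fuse through the identity $\frac{J_0(r)^2}{1-J_0(r)^2}+1=\frac{1}{1-J_0(r)^2}$, collapsing the integral to $\E[N_0\un_{\{|X_0|>T\}}]=2\sqrt2\,r\,\Psi\big(T/\sqrt{1-J_0(r)^2}\big)$. Combining with the factor $\tfrac12$ above yields the claim.

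The main obstacle is this middle step: one must realise that the natural conditioning is on $X_0$ (rather than on a pointwise value of $g$), because only then does $N_0$ become a genuine level–crossing count for the \emph{independent} stationary field $h$, so that the first moment factorises and the two Gaussian exponents combine exactly into $1/(1-J_0(r)^2)$; any conditioning that couples the level of $h$ to its local behaviour destroys this exact evaluation. A secondary point requiring care is the parity argument $\un_{\calE_r^{c}}\le\tfrac12 N_0$, which produces the sharp constant $\sqrt2\,r$ rather than $2\sqrt2\,r$, together with the almost sure absence of tangential zeros.
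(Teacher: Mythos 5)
Your proof is correct and follows essentially the same route as the paper: the same decomposition of $f$ on the circle into the constant mode $X_0J_0(r)$ plus an independent stationary fluctuation, the same parity/Markov bound turning the no-zero event into a first-moment estimate, the same Kac--Rice evaluation, and the same Gaussian integration over $\{|X_0|>T\}$ in which the two exponents fuse via $\frac{J_0(r)^2}{1-J_0(r)^2}+1=\frac{1}{1-J_0(r)^2}$. The only cosmetic difference is that you compute the second spectral moment $r^2/2$ by differentiating the covariance $J_0(2r\sin(\tau/2))$ at $\tau=0$, whereas the paper obtains it from the Bessel identity $\sum_{n\in\Z}n^2J_n(x)^2=x^2/2$.
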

\begin{proof}
In this proof we use the expression \eqref{eq:GaussianField} for $f$. Fix $r>0$ and for each $\theta\in[0,2\pi]$, let $u(\theta)=X_0J_0(r)-f((r\cos(\theta),r\sin(\theta))$. We first fix $x_0>0$ and try to estimate the probability that $u(\theta)$ crosses the level $x_0 J_0(r)$ when $\theta$ varies on the unit circle. Also, throughout our calculations, we will use the following two Bessel function identities:
\begin{equation}\label{e.identities}
J_{-n}(x)=(-1)^nJ_n(x);\, \sum_{n\in\Z} J_n(x)^2=1;\, \sum_{n\in\Z} n^2 J_n(x)^2 = \frac{x^2}{2}\, .
\end{equation}
These identities follow from the following classical formula (cf. \cite[Chapter 2]{watsonn})
\[
\sum_{n\in\Z} t^n J_n(x) = e^{\frac{x}{2}(t-t^{-1})}
\]
by setting $t=e^{i\theta}$ and applying Parseval's formula.\\
Now, observe that, since $f$ is stationary, isotropic and $(f,\nabla f)$ is non-degenerate, $u$ is a stationary Gaussian field on the unit circle and the field $(u,u')$ is non-degenerate so that $u$ crosses the level $x_0J_0(r)$ almost surely an even number of times. In particular, by Markov's inequality,
\[
\prob\left[\exists\theta;\, u(\theta)=x_0J_0(r)\right]\leq\frac{1}{2}\E\left[\textup{Card}\{\theta\in [0,2\pi];\, u(\theta)=x_0J_0(r)\}\right]\, .
\]
To compute the right-hand side of this inequality, we apply the Kac-Rice formula (see Theorem 6.2 from \cite{azws}). Define $\alpha(r)>0$ by $\alpha(r)^2 = \var(u(\theta))=2\sum_{n\geq 1} J_n(r)^2$. Then, by the first and second identity in \eqref{e.identities}, $\alpha(r)^2 =1-J_0(r)^2$. By the Kac-Rice formula,
\[
\E\left[\textup{Card}\{\theta\in [0,2\pi];\, u(\theta)=x_0J_0(r)\}\right]=\int_0^{2\pi}\E\left[|u'(\theta)|\, |\, u(\theta)=x_0J_0(r)\right]\frac{e^{-\frac{x_0^2 J_0(r)^2}{2\alpha(r)^2}}}{\alpha(r)\sqrt{2\pi}}d\theta\, .
\]
The fact that $u$ is stationary, implies first that the integrand is independent of $\theta$ and second, that $u'(\theta)$ is independent of $u(\theta)$. Observe that $\var\left(u'(\theta)\right)=2\sum_{n\geq 1} n^2 J_n(r)^2$. By the first and third identities in \eqref{e.identities}, $\var(u'(\theta))=\frac{r^2}{2}$. Moreover, if $\xi$ is a centred Gaussian of variance one, $\E\left[|\xi|\right]=\sqrt{\frac{2}{\pi}}$. Therefore,
\[
\E\left[\textup{Card}\{\theta\in [0,2\pi];\, u(\theta)=x_0J_0(r)\}\right]=\frac{\sqrt{2}r}{\alpha(r)}\exp\left(-\frac{x_0^2J_0(r)^2}{2\alpha(r)^2}\right)\, .
\]

On the other hand, we have
\begin{equation*}
\mathbb{P}[\neg\mathcal{E}_r;\ X_0=x_0] = \mathbb{P}\big{[} \exists \theta\in [0,2\pi];\, u(\theta)=x_0J_0(r)\big{]} \leq \E\left[\textup{Card}\{\theta\in [0,2\pi];\, u(\theta)=x_0J_0(r)\}\right]
\end{equation*}

In particular, for each $T\in\R$, the probability that $f$ has a zero on the circle of radius $r$ centred at $0$ satisfies
\begin{equation}\label{eq:LowerProp}
\prob\left[\calE_r\right]\geq \E\left[\left(1 - \frac{r}{\sqrt{2}\alpha(r)}\exp\left(-\frac{X_0^2J_0(r)^2}{2\alpha(r)^2}\right)\right)\un_{[|X_0|\geq T]}\right]\, .
\end{equation}
Let $\Psi(t)=\prob[X_0\geq t]=\int_t^{+\infty}e^{-t^2/2}\frac{t}{\sqrt{2\pi}}$. Then, the right-hand side is
\[
2\Psi(T)-2\frac{r}{\sqrt{2}\alpha(r)}\E\left[\exp\left(-\frac{X_0^2J_0(r)^2}{2\alpha(r)^2}\right)\un_{[X_0\geq T]}\right]\, .
\]
But for each $a>0$, a simple calculation shows that $\E\left[\exp(-aX_0^2)\un_{[X_0\geq T]}\right]=\frac{1}{\sqrt{1+2a}}\Psi\left(\sqrt{1+2a}T\right)$ so
\[
\prob\left[\calE_r\right]\geq 2\Psi(T)-\frac{\sqrt{2} r}{\sqrt{\alpha(r)^2+J_0(r)^2}}\Psi\left(\sqrt{1+J_0(r)^2/\alpha(r)^2}T\right)
\]
Replacing $\alpha(r)$ by its expression, we get, for any $T>0$ and any $r>0$,
\begin{equation}
\prob\left[\calE_r\right]\geq 2\Psi(T)-\sqrt{2} r\Psi\left(\frac{T}{\sqrt{1-J_0(r)^2}}\right)\, .
\end{equation}
\end{proof}
Let $G_r=G_r(f)\subset \R^2$ be the (random) set of points $x\in\R^2$ for which $f$ does not vanish on the circle of radius $r$ centred at $x$.  If $x, y\in \R^2$, we will write $x\sim_r y$ if $x,y\in G_r$ and $x,y$ belong to the same connected component of $\R^2\backslash f^{-1}(\{0\})$. The next step of the proof is to show that the connected components of $G_r$ are not too large.
\begin{lemma}\label{lem:BoundDist}
Let $r_1$ and $r_2$ denote the first and second zeros of $J_0$ and let $r\in]r_1,r_2[$. Then, for each $x\in G_r$, the connected component of $x$ in $\R^2\backslash f^{-1}(\{0\})$ is included in $B(x,r)$. In particular, if $x,y\in G_r$ are such that $x\sim_r y$, then $|x-y|\leq r$.
\end{lemma}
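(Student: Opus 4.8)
The plan is to compare the sign of $f$ at the centre $x$ with its sign on the circle $\partial B(x,r)$, the link between the two being the mean value property attached to the Helmholtz equation $\Delta f + f = 0$. We may assume $f(x)\neq 0$, since otherwise $x$ does not belong to $\R^2\backslash f^{-1}(\{0\})$ and there is no component to control. The central claim I would establish is that the angular average of $f$ over the circle of radius $\rho$ centred at $x$ equals $f(x)J_0(\rho)$: writing $h(\rho)=\frac{1}{2\pi}\int_0^{2\pi} f\big(x+\rho(\cos\theta,\sin\theta)\big)\,d\theta$, one has $h(\rho)=f(x)\,J_0(\rho)$ for all $\rho\geq 0$.

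To prove this identity I would expand $f$ in polar coordinates around $x$ as a Fourier series in the angle, $f\big(x+\rho(\cos\theta,\sin\theta)\big)=\sum_{n\in\Z}a_n(\rho)e^{in\theta}$. The equation $\Delta f+f=0$ forces each coefficient to solve the Bessel equation $a_n''+\rho^{-1}a_n'+(1-n^2/\rho^2)a_n=0$, and since $f$ is smooth at $x$ (hence at $\rho=0$) only the solution regular at the origin survives, so $a_n(\rho)=c_nJ_n(\rho)$. The average is $h(\rho)=a_0(\rho)=c_0J_0(\rho)$, and evaluating at $\rho=0$ with $J_0(0)=1$ gives $c_0=f(x)$. (Equivalently, one differentiates $h$ and applies the divergence theorem with $\Delta f=-f$ to get the same Bessel ODE directly for $h$; this is just the mean value property for eigenfunctions of the Laplacian, and it is also immediate from the explicit expansion \eqref{eq:GaussianField} once the origin is translated to $x$.)

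With this in hand the rest is short. Since $x\in G_r$, the continuous function $f$ does not vanish on the connected circle $\partial B(x,r)$, so it keeps a constant sign there. Because $r\in\,]r_1,r_2[$, the Bessel function satisfies $J_0(r)<0$, so $h(r)=f(x)J_0(r)$ has sign opposite to that of $f(x)$ (and is nonzero, as both factors are). As $h(r)$ is the average of $f$ over $\partial B(x,r)$, the function $f$ takes that opposite sign somewhere on the circle, hence everywhere on it by constancy of sign. Thus $f$ has opposite signs at $x$ and on all of $\partial B(x,r)$.

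Finally, let $C$ be the connected component of $x$ in $\R^2\backslash f^{-1}(\{0\})$; on $C$ the sign of $f$ is constant and equal to that of $f(x)$. If $C$ met $\partial B(x,r)$, then $f$ would take the sign of $f(x)$ at that meeting point, contradicting the previous paragraph. Hence $C$ is disjoint from $\partial B(x,r)$, and since it is connected and contains the interior point $x$, it lies entirely in the open ball $B(x,r)$. The ``in particular'' clause then follows: if $x\sim_r y$, then $y$ belongs to the component of $x$, which is contained in $B(x,r)$, so $|x-y|<r$. I expect the only genuine content to lie in the first step; once the circular average is pinned to $f(x)J_0(\rho)$, the sign comparison and the topological conclusion are routine.
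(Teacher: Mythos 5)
Your proof is correct and takes essentially the same route as the paper: the paper likewise averages $f$ over rotations about $x$, identifies the resulting radially symmetric solution of $(\Delta+1)g=0$ as $\lambda J_0(|z-x|)$ with $\lambda=f(x)$, and combines $J_0(r)<0$ on $]r_1,r_2[$ with the constant sign of $f$ on $\partial B(x,r)$ to conclude. Your Fourier--Bessel expansion is merely a more explicit justification of the mean value identity $h(\rho)=f(x)J_0(\rho)$ that the paper obtains by citing radial symmetry and regularity at the centre.
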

\begin{remark}
The result of this lemma may be optimal, deterministically speaking. But, we expect equivalence classes of $G_r$ of diameter close to $r$ to be very rare. We probably lose a large factor in this step. However, this intuition seems difficult to quantify.
\end{remark}
\begin{proof}[Proof of Lemma \ref{lem:BoundDist}]
In this proof, we use the fact that $f$ satisfies $\Delta f+f=0$ almost surely. Without loss of generality, we may suppose that $f(x)> 0$. We claim that $f(z)<0$ for all $z$ such that $|x-z|=r$. We already know that $f(z)$ has constant sign for all $z$ such that $|x-z|=r$. Consider the function
$$g(z) := \frac{1}{2\pi} \int_0^{2\pi} f(R_\theta z) \mathrm{d}\theta,$$
where $R_\theta$ is the rotation of centre $x$ and of angle $\theta$. The function $g$ satisfies $(\Delta+1)g=0$ (because $f(R_\theta\cdot)$ does for each $\theta\in[0,2\pi]$), and it is radially symmetric around $x$. Therefore, we must have
$$g(z)= \lambda J_0 (|z-x|)$$
for some $\lambda \neq 0$. Since $f(x)>0$, we must have $\lambda>0$, and hence, $f(z)<0$ for all $z$ such that $|x-z|=r$. Therefore, the connected component of $\R^2\backslash f^{-1}(\{0\})$ containing $x$ is included in $B(x,r)$ and the Lemma follows.
\end{proof}
We now finish off the proof of Theorem~\ref{t.main} by estimating the expected size of $G_r$ using Lemma~\ref{l.good}. Let $0<r_1<r_2$ be the first two zeros of the Bessel function $J_0$ and fix $r\in]r_1,r_2[$. Then, by Lemma \ref{lem:BoundDist}, for each $x\in G_r$, the equivalence class of $x$ has diameter at most $r$. By the isodiametric inequality (see paragraph 10 of \cite{bonneson_fenchel}) its area is no greater than $\frac{\pi}{4}r^2$. Also, two different equivalence classes are included in different connected components of $\R^2\backslash f^{-1}(\{0\})$. Finally, if $R>r$ and $x\in B(0,R-r)$, then the connected component of $\R^2\setminus f^{-1}\left(\{0\}\right)$ containing $x$ is included in $B(0,R)$. Thus, for each $R>r$,
\[
\vol\left(G_r\cap B(0,R-r)\right)\leq\sum_c\vol(c)\leq\frac{\pi}{4}r^2\times N(R,f)
\]
where the sum runs over the equivalence classes of $G_r$ intersecting $B(0,R-r)$. Taking expectations, by stationarity, we get
\[
\vol\left(B(R-r)\right)\prob\left[0\in G_r\right]\leq\frac{\pi}{4}r^2\times\E\left[N(R,f)\right]\, .
\]
Dividing by $\vol\left(B(0,R)\right)=\pi R^2$ and letting $R\rightarrow+\infty$, we get $\prob\left[0\in G_r\right]\leq\frac{1}{16}r^2\nu_{BS}$ which yields the following lower bound for the Bogomolny-Schmit constant:
\[
\nu_{BS}\geq \frac{16 \prob[0\in G_r]}{r^2}\, .
\]
By Lemma~\ref{l.good}, we have, for each $T>0$ and each $r>r_0$,
\[
\nu_{BS}\geq\frac{32}{r^2}\left[\Psi(T)-\frac{r}{\sqrt{2}}\Psi\left(\frac{T}{\sqrt{1-J_0(r)^2}}\right)\right]\, .
\]
Taking $r=3.8$ (the first minimum of $J_0$) and $T=3.35$ (the smallest $T$ for which in (\ref{eq:LowerProp}), the function whose expectation we compute is always positive), we get
\[
\frac{32}{r^2}\geq  2.216\, ;\ \frac{T}{\sqrt{1-J_0(r)^2}}\geq 3.659\, ;\ \frac{r}{\sqrt{2}}\leq 2.69\, .
\]
We therefore obtain the announced lower bound
\[
\nu_{BS}\geq 1.39\times 10^{-4}\, .
\]
\bibliographystyle{alpha}
\bibliography{ref}

\end{document}